\newcommand{\BibTeX}{B\kern-.05em{\sc i\kern-.025em b}\kern-.08em\TeX}
\newtheorem{theorem}{Theorem}
\newtheorem{observation}[theorem]{Observation}
\newtheorem{definition}{Definition}
\newtheorem{remark}[definition]{Remark}
\newtheorem{example}[definition]{Example}
\begin{document}
\begin{frontmatter}

\title{Algorithms for Collaborative Harmonization}

\author[1]{\fnms{Eyal} \snm{Briman}}
\author[1]{\fnms{Eyal} \snm{Leizerovich}}
\author[1]{\fnms{Nimrod} \snm{Talmon}}

\address[1]{Ben Gurion University of the Negev}

\begin{abstract}
We consider a specific scenario of text aggregation, in the realm of musical harmonization. Musical harmonization shares similarities with text aggregation, however the language of harmony is more structured than general text. Concretely, given a set of harmonization suggestions for a given musical melody, our interest lies in devising aggregation algorithms that yield an harmonization sequence that satisfies the following two key criteria:
(1) an effective representation of the collective suggestions;
and (2) an harmonization that is musically coherent.
We present different algorithms for the aggregation of harmonies given by a group of agents and analyze their complexities. The results indicate that the Kemeny and plurality-based algorithms are most effective in assessing representation and maintaining musical coherence.
\end{abstract}
\end{frontmatter}

\section{Introduction} 
Social choice theory provides aggregation algorithms that facilitate collaborative creation of diverse outputs within agent communities: e.g., single-winner elections, multi-winner elections~\cite{arrow2010handbook}, and participatory budgeting~\cite{skowronparticipatory,aziz2021participatory}.
No good solutions, however, exist for the aggregation of preferences regarding the collaborative creation of text documents.
The Following work serves as a milestone towards this process of collaborative text writing.
We consider a musical melody and a population of agents, each of which is suggesting a different harmonic sequence (equivalently, an harmonization) for the melody; and our aim is to aggregate those suggestions to come to a single harmonization. Our pursuit is to discover algorithms that strike a balance between respecting the agent community's preferences and crafting aggregated sequence of harmonies that are likely to appear according to prior knowledge of chord sequences translated to a 2-gram. 

To this end, we first model the problem of collaborative harmonization and then explore various specially-crafted aggregation algorithms in pursuit of this goal. We then report on computer-based simulations performed on real-world data and generated data.\footnote{In this work, we present an aggregation method for harmonies without addressing the melody. We assume for simplicity needs, that the user is responsible for providing a valid harmony to accompany a melody, and we solely offer the aggregation methods.}

\paragraph{Paper Structure}
After reviewing related work, we provide musical preliminaries~\ref{section:preliminaries}).
This groundwork will facilitate the formal delineation of the social choice framework for collaborative harmonization in Section~\ref{section:formal model}.
Proceeding from there, we delineate various approaches and algorithms addressing the problem in Section~\ref{section:aaa} and conduct a comprehensive analysis of the computational aspects associated with each approach in Section~\ref{section:complexity}.
Concluding the substantive portions of the paper, we execute simulations to evaluate the algorithms and approaches in Section~\ref{section:simulations}, presenting and scrutinizing the results in detail.

\paragraph{Related Work} 
We mention related work from both the social choice literature and chord sequence generation.
First, we mention the work on general aggregation in metric spaces~\cite{bulteau2018aggregation,zvi2021iterative}, that also includes suggestions on how to perform text aggregation. Another model that is relevant is that of \emph{multiple attribute list aggregation}~\cite{briman2023multiple}, in which a sequence of elements (not necessarily text characters) are the output of the social choice instance. We also wish to mention judgment aggregation~\cite{grossi2022judgment}, which corresponds to a very general social choice setting and for which some of our algorithms are related, and get inspiration from the general work on aggregating preferences under constraints in formatted ways~\cite{li2015aggregating}
Next we mention works on decision-theoretic
planning techniques into automatic harmony generation and chords progression generation based on stochastic processes  ~\cite{paiement2005probabilistic,clement1998learning,yi2007automatic}. Our work makes assumptions for simplicity needs that differ from the classic chord progression problems, as we are interested in aggregation of chords preferences of agents, with respecting the probability of a chord progression to appear (based on pre-trained 2-gram model), but with no explicit use of the melody it self-- we assume the user is responsible of giving a valid harmony to a melody, and we offer an aggregation method.

\section{Musical Preliminaries}\label{section:preliminaries}

While the focus of the paper is rather on its social choice aspects, it is nevertheless essential to provide a foundational understanding of the musical elements we will be working with. Music, like other complex systems, can be viewed mathematically. At its core, music is composed of three fundamental components: \textit{rhythm}, \textit{melody} and, \textit{harmony}. Below, to make the preliminaries accessible also to readers without a background in music theory, we break these components using a more abstract, mathematical framework~\cite{levine2011jazz,rohrmeier2020syntax}.

\paragraph{Rhythm}
Rhythm in music is the organized arrangement of sound and silence within time. A melody consists of beats, grouped into sections called bars or measures, similar to paragraphs in writing. Each bar, determined by a time signature, contains a specific number of beats, indicating the beat count and the type of note representing one beat. This rhythmic structure forms the pulse of a song, guiding musicians and engaging listeners with its rhythmic pattern.

\paragraph{Melody}
The melody, in its simplest form, can be seen as a sequence of notes played one after the other. Each note can be represented using an "alphabet", which consists of a finite set of symbols. These symbols are include letters such as C, D, E, and so on (there is, indeed, a different, equivalent "alphabet" in which the symbols are Do, Re, Mi, which we, however, will not use here). These symbols represent different pitches or musical notes.
E.g., consider the melody of the well-known nursery rhyme "Twinkle, Twinkle, Little Star." It can be broken down into a sequence of note symbols: C-C-G-G-A-A-G. Note how the arrangement of these symbols forms the song's melody.

\paragraph{Harmony}
As we are interested in collaborative harmonization, the harmony will be the crucial aspect of the music that we will be concentrated on.
In particular, in this paper we focus our investigation on the harmony; and take a close look at its rules and grammar.

Acoustically, harmony complements the melody by introducing a layer of complexity to music. It focuses on the simultaneous sounding of multiple notes, forming so-called chords; the notes of these chords are played in parallel (simultaneously) to the notes of the main melody. Continuing our mathematical framework, chords can be considered as combinations of note symbols. In our "musical alphabet", a chord like ``C-E-G'' consists of symbols representing individual notes played together (in addition to the main melody). The structure within these chord sequences arises from the specific combination of notes and their temporal arrangement.

The relationship between harmony and melody is fundamental to music. Melodies are often played over a backdrop of harmonies, creating rich and emotionally expressive compositions. The choice of chords and their sequence can significantly influence the mood and feel of a musical piece, much like how words and their arrangement in a language can convey meaning and tone. For example, playing a melody over a \textit{C major} chord (containing C, E, and G) can result in a different emotional tone than playing the same melody over an \textit{A minor} chord (containing C, E, and A).

\begin{figure}[t]
    \centering
    \begin{subfigure}{10cm}
        \centering
        \includegraphics[width=7cm]{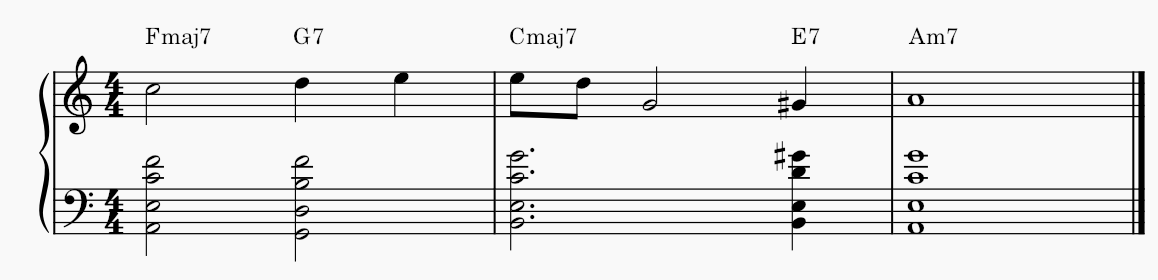}
        \caption{A simple harmonization of a melody.}
        \label{simple_harmony}
    \end{subfigure}
    
     \vspace{1cm} 
    
    \begin{subfigure}{10cm}
        \centering
        \includegraphics[width=7cm]{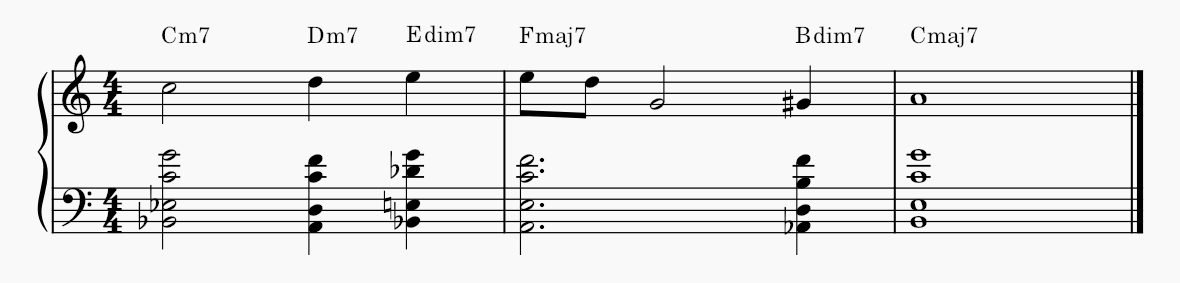}
        \caption{A more complex harmonization of the same melody.}
        \label{complex_harmony}
    \end{subfigure}
    \vspace{1cm}
    \caption{Comparison of different harmonizations.\ \\ \ \\}
    \label{fig:both_harmonies}
\end{figure}

\begin{example}
Sub-figure~\ref{simple_harmony} and sub-figure~\ref{complex_harmony} present the same melody on their upper staves, accompanied by different harmony symbols like Fmaj7, G7, Cm7, and E7. However, the bottom staves in both sub-figures depict the differing harmonic interpretations through the specific note collections associated with the harmony symbols shown on the upper staves.
\end{example}
\begin{figure}[t]
\centering
\fbox{\begin{minipage}{0.45\textwidth}
\footnotesize    \begin{itemize}[label=]
        \item CMaj7, Cm7, CmMaj7, C7, CdimMaj7, Cdim7, Cm7b5, Cm6, C+7, C+maj7,
        \item DbMaj7, Dbm7, DbmMaj7, Db7, DbdimMaj7, Dbdim7, Dbm7b5, Dbm6, Db+7, Db+maj7,
        \item DMaj7, Dm7, DmMaj7, D7, DdimMaj7, Ddim7, Dm7b5, Dm6, D+7,D+maj7,
        \item Ebmaj7, Ebm7, EbmMaj7, Eb7, EbdimMaj7, Ebdim7, Ebm7b5, Ebm6, Eb+7, Eb+maj7,
         \item Emaj7, Em7, EmMaj7, E7, EdimMaj7, Edim7, Em7b5, Em6, E+7,E+maj7,
        \item FMaj7, Fm7, FmMaj7, F7, FdimMaj7, Fdim7, Fm7b5, Fm6, F+7, F+maj7, 
        \item Gbmaj7, Gbm7, GbmMaj7, Gb7, GbdimMaj7, Gbdim7, Gbm7b5, Gbm6, Gb+7, Gb+maj7,
        \item GMaj7, Gm7, GmMaj7, G7, GdimMaj7, Gdim7, Gm7b5, Gm6, G+7, G+maj7,
        \item AbMaj7, Abm7, AbmMaj7, Ab7, AbdimMaj7, Abdim7, Abm7b5, Abm6, Ab+7, Ab+maj7,
        \item AMaj7, Am7, AmMaj7, A7, AdimMaj7, Adim7, Am7b5, Am6, A+7, A+maj7,
        \item BbMaj7, Bbm7, BbmMaj7, Bb7, BbdimMaj7, Bbdim7, Bbm7b5, Bbm6, Bb+7, Bb+maj7,
        \item BMaj7, Bm7, BmMaj7, B7, BdimMaj7, Bdim7, Bm7b5, Bm6, B+7, B+maj7.        
    \end{itemize}
\end{minipage}}
\caption{The alphabet of chords used in this study.}
\label{figure:alphabet}
\end{figure}

\subsection{A Basic Grammar of Harmony}\label{Grammar}

Our main motivation for studying the collaborative harmonization problem is that harmony can be naturally represented as text (where each chords corresponds to its character) and, more importantly, it has some general structured grammar that is more basic than general text documents.
In what follows we provide a simplified view of the grammar of harmony that is useful for our purposes.

In this paper, we a consider finite set of chords we address as "alphabet of chords" as shown in Figure~\ref{figure:alphabet} (this is not a complete list of all chords used in western music, though, but it is sufficient for the purposes of our study).
Correspondingly, an harmonic sequence is simply a sequence of characters from that alphabet of chords.


\paragraph{Chord Similarity -- Spatial Grammatic Aspect}
Considering the chords shown in Figure~\ref{figure:alphabet}, it is important to realized that, acoustically, certain chords are more similar to others. 
Essentially, this is somehow similar to the fact that certain \textbf{words} are more similar to other words (e.g., bicycle and bike are quite similar to each other, while apple and spaceship are perhaps more semantically distant).

Correspondingly, it is useful to consider a distance metric to quantify chord similarity.
Below we discuss two options.
\begin{itemize}

\item \textbf{Jaccard Distance}: Recalling that each chord (internally) contains few notes, this metric measures dissimilarity by comparing the intersection size between the notes of two chords (treated as sets of 4 notes) to the size of their union. It accounts for overlapping similar-pitch notes, providing a similarity measure that ranges from 0 (no similarity) to 1 (identical chords).

Concretely, the Jaccard distance between two chords $A$ and $B$ is
\begin{align}
    d_{\text{Jaccard}}(A,B)=1- \frac{|A\cap B|}{|A\cup B|}\ ,
\end{align}
where $|A \cap B|$ denotes the number of common notes between chords $A$ and $B$, and $|A \cup B|$ represents the total number of unique notes in both chords.

\begin{example}
In Figure~\ref{figure:two chords}, we can see that CMaj7 (C E G B) and FMaj7 (F A C E) have two common notes, namely C and E, and 6 unique notes - B, C, E, F, G, A . So their Jaccard distance is~$\frac{2}{3}$.
\end{example}

\begin{figure}[t]
    \centering
    \includegraphics[width=4cm]{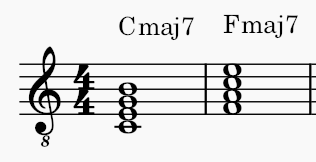}
    \caption{Two Chords: CMaj7 and FMaj7.}
    \label{figure:two chords}
\end{figure}

\item \textbf{Tonal Distance}: Besides the simple note-counting process that unerpins the Jaccard distance, there are other psychoacoustic features that affect chord similarity. The tonal distance relates to the \emph{acoustic-functional relationship} between chords: it assesses the harmonic function of chords, rather than focusing solely on the pitch content~\cite{rohrmeier2011towards,harmonyBook}.

\end{itemize}
As the concept of  \emph{acoustic-functional relationship} between chords is challenging to model directly, we have chosen the Jaccard distance as our metric for this work although it is a less common ''chord-distance'' in practice, as we are interested in the aggregation process it self. We speak about this decision more in section~\ref{section:outlook} and recommend investigating other distance metrics for future work.

\paragraph{Chord Progression -- Temporal Grammatical Aspect}
Besides the similarity (and interchangeability) of chords, there is also importance to chord progression -- how chords in a sequence relate to each other.
Essentially, this is somehow similar to the fact that certain \textbf{words} are more likely to come after other words (e.g., the word \textit{dog} is somehow likely to come after the word \textit{barking} but perhaps less likely to come after the word \textit{fruitful}; this concept is known as the n-gram model in natural language processing~\cite{ngram}). 

Concretely, in this paper, for the calculation of transition probabilities used in the context of chord progressions, we took a data-based approach. In particular, we considered a data set consisting of 1,410 Jazz songs (iRealPro -- \url{https://www.irealpro.com/main-playlists}). To extract the chord symbols from this data set, we utilized the \url{https://github.com/pianosnake/ireal-reader}{ireal-reader} tool. After acquiring the chord symbols, our next step involved converting each chord into a simplified representation using the alphabet shown in Figure~\ref{figure:alphabet}. For the purpose of aggregation, we filtered 1015 songs from the data set (in particular, those that are precisely 32 bars in length, each encompassing a maximum of two chords per bar; technically, when encountering instances where only a single chord was present in a bar, we replicated it to ensure uniformity and consistency throughout the data set). 
Then, we used the adjusted data set to compute the probabilities of two chords being successive in an harmonic sequence.

As a result, we have a complete, directed graph with weighted arcs that represent the probability of one chord appearing after another, defined as $G = (V, w)$. In this context, each vertex $v \in V$ corresponds to a chord from "chords alphabet" as shown in Figure~\ref{figure:alphabet}. The set of edges, $E$ encompasses all possible arcs, where the weight $w_{(u, v)}$ on the arc $(u, v) \in E$ signifies the degree of smoothness associated with transitioning from chord $u$ to chord $v$, with $0 \leq w_{(u, v)} \leq 1$.

\section{Formal Model}\label{section:formal model}

We describe our formal model and discuss how to evaluate the quality of different winning harmonizations.

\paragraph{A Formal Model of Collaborative Harmonization}
With respect to a certain "harmonic alphabet $A$ (containing the set of $m$ possible chords), an \emph{instance} of our model contains the following ingredients:
\begin{itemize}

\item
A given \emph{size} $k$ of the harmonic sequence to produce.

\item
A set $V = \{v_1, \ldots, v_n\}$ of $n$ agents; each agent suggests its ideal harmonic sequence, denoted by $v_i$, where $v_i \in A^k$.

\end{itemize}

It is thus convenient to denote an instance of the model by $(k, V)$.
Given an instance $(k, V)$ of the model, a \emph{solution} $W$ corresponds to an aggregated harmonic sequence; formally, $W \in A^k$.

An \emph{aggregation method} (i.e., a voting rule) for the setting of collaborative harmonization takes an instance $(k, V)$ as its input and outputs a solution $W$.

\begin{example}
Figure~\ref{fig:aggregation example} is an example of an input and an output of our algorithms; the example contains $5$ agents (depicted on the left) and a possible aggregated harmonic sequence (depicted on the right).
\begin{figure*}[t]
    \centering
    \includegraphics[width=15cm]{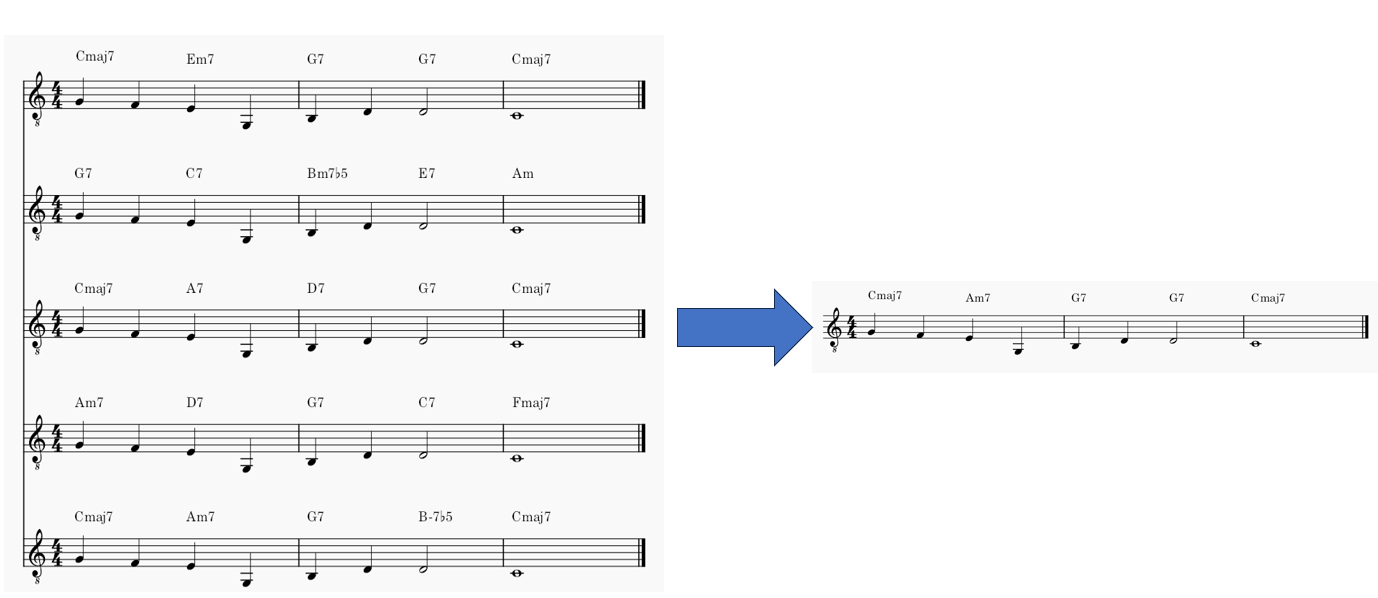}
    \caption{An example of an instance to the collaborative harmonization problem (on the left) with one possible aggregated harmonic sequence (on the right).}
    \label{fig:aggregation example}
\end{figure*} 
\end{example}

\begin{remark}
Note that, perhaps surprisingly, the melody itself is not part of instances of our model, as we address only the harmonization; however, refer to Section~\nameref{section:outlook} to a more elaborate discussion regarding this point.
\end{remark}

\paragraph{Solution Quality}
What is missing from the section above is a discussion on how to evaluate the quality of solutions (i.e., aggregated harmonic sequences) of instances of our model.
Informally speaking, we are looking for aggregation methods whose output strike a balance between these two aspects:
(1) first, a solution shall respect the suggestions of the agents -- i.e., the agents ideal harmonic sequences shall be taken into account by the aggregation algorithm;
and (2) second, a solution should be musically appealing on its own.

There are several ways to formally capture these two desires. Below we describe our approach, which builds on (1) chord similarity for the first aspect; and (2) chord progression probabilities for the second aspect.

\subsection{Agent Satisfaction}

Our approach at the consideration of the correspondence between the ideal harmonic sequences of the agents and a possible winning harmonization is to define \emph{agent satisfaction} (this follows the utility-based approach in social choice, such as used, e.g., for participatory budgeting~\cite{talmon2019framework}).
Concretely, consider some agent $v_i$ and a possible winning harmonization $W$; intuitively, the more similar $v_i$ is to $W$ (with respect to the metric used for chord similarity) the more satisfied $v_i$ shall be. 
Formally, we define as follows:
\begin{itemize}

\item
Let $d$ be the Jaccard distance (refer to Section Similarity). 

\item
We then define, for agent $i$ voting as $v_i$ and a possible winning harmonization $W$, the satisfaction of agent $i$ from $W$ to be (assume that $|v_i| = |W| = k$:
$$sat(i, W) :=  \sum_{j \in [k]} d(v_i[j], W[j])\ .$$

\end{itemize}

Using the concept of agent satisfaction, one may, e.g., consider the goal of maximizing the social welfare (i.e., maximizing the sum of agent satisfaction $\Sigma_{i \in [n]} sat(i, W)$).
In this paper we largely take this utilitarian approach.

\subsection{Musical Coherence}\label{section:musical coherence}

Besides corresponding to the ideal harmonic sequences of the agents, a solution of a collaborative harmonization instance shall also be musically coherent.
Our approach towards the assessment of the musical quality of a potential winning harmonization builds on the 2-gram approach described in Section~\nameref{section:preliminaries}.
While, indeed, musical coherence is more involved than our simple 2-gram approach (such as using an n-gram), our approach nevertheless captures the basics of chord progression: acoustically, a smoother transition sequence between chords contributes to a smoother ``flow''. 

\begin{remark}
While it makes sense to consider Musical Coherence of the whole harmonization as a whole, a natural simplification is to consider pairwise transitions.
\end{remark}

Formally, we define Musical Coherence as follows:
\begin{itemize}

\item We start by the \textit{chord transition graph} $G$, as described in Section~\nameref{section:preliminaries}.

\item
Next, we define the \emph{Musical Coherence} of a potential winning harmonization, denoted as $W$. It is calculated as the product of consecutive smoothness values represented by the weights of $G$, denoted by $w$. Formally, the Musical Coherence of $W$ can be expressed as:
$$sat(W) := \sum_{j \in [k - 1]} \log(w(W[j], W[j + 1]))\ .$$

\end{itemize}

This mathematical framework allows us to quantitatively evaluate the musical quality of a harmonization by considering the smoothness of transitions between chords in the context of the defined chord transition graph.

\subsection{Example and Discussion}

Consider the following collaborative harmonization scenario involving a sequence of $4$ chords and a community of $3$ agents.
The ideal harmonic sequence of the three agents are as follows:
\begin{itemize}

\item
Agent 1: Cmaj7, Dm7, Db7, Cmaj7.

\item
Agent 2: Am7, Dm7, E7, Am7.

\item
Agent 3: Cmaj7, Fmaj7, G7, Am7.

\end{itemize}


 \begin{table}[t]
\centering
 \setlength{\tabcolsep}{12pt}
 \begin{tabular}{|c||c|}
 \hline
 Chord & Notes \\
 \hline\hline
 Cmaj7 & C,E,G,B \\
 \hline
 Am7& A,C,E,G\\
 \hline
 Dm7& D,F,A,C\\
 \hline
 Db7& Db,F,Ab,B\\
 \hline
 Fmaj7& F,A,C,E\\
 \hline
 E7& E,Ab,B,D\\
 \hline
 G7& G,B,D,F\\
 \hline
 \end{tabular}
 \caption{Chords and Notes they Contain.}
 \label{table:Chords and Notes they contain}
 \end{table}

Suppose we have a potential solution, as follows
\begin{enumerate}
    \item  Cmaj7, Dm7, E7, Am7.
\end{enumerate}

 Agent satisfaction (based on Jaccard distance of common notes of two chords) is calculated using table~\ref{table:Chords and Notes they contain} as follows:
\begin{align*}
\text{Agent 1:} &\ (0 + 0 + \frac{2}{3}  + 0.4) =1.0667 \\
\text{Agent 2:} &\ (0.4 + 0 + 0 + 0) = 0.4 \\
\text{Agent 3:} &\ (0 + 0.4 + \frac{2}{3} + 0) = 1.0667 \\
\end{align*}
Thus the total satisfaction is $2.53334$.

\section{Approaches and Algorithms}\label{section:aaa}

In this section, we develop and discuss algorithms for solving instances of collaborative harmonization.S
For convenience, we will use the following notation:
\begin{itemize}

\item 
The preferences of the agents are denoted using a matrix $\mathcal{B}$ of size $n \times k$ (for $n$ agents and $k$ being the length of the harmonic sequence). Each element $b_{i,j}$ in this matrix represents the chord selected by agent $i$ in position $j$.

\item
Given such an instance, a solution is denoted by $W \in A^k$.

\end{itemize}

\paragraph{Plurality}\label{section:plurality}
We first consider the adaptation of the Plurality rule to our setting. In this simple aggregation method, we consider each chord-position independently; and, seek to find the most popular chord for every position of the $k$ chords. Procedurally, Plurality works by calculating a score $M(W)$ for each chord $W[j]$ as follows:
\begin{align}
    M(W) = \sum_{i=1}^n\sum_{j=1}^k \mathbb{I}(b_{i,j}=W[j])\ .
\end{align}

In this equation, $b_{i,j}$ represents the chord selected by agent $i$ in position $j$ of matrix $B$, and $\mathbb{I}$ is the indicator function.
And, the output of the Plurality Algorithm is the chord $W$ that maximizes $M(W)$, represented as $\arg\max_W{(M(W))}$.

\paragraph{Kemeny Rule}\label{section:kemeny}
The Kemeny rule~\cite{ali2012experiments} is a well-known aggregation method that is applicable to many social choice settings. In the standard model of ordinal-based social choice, Kemeny proceeds by assigning a score to each possible social welfare function (i.e., a linear order of the available candidates) on the distance from that permutation to all other individual permutations (i.e., votes); and selecting the one for which this sum of (swap) distances is the smallest.
We apply this concept to our setting. This is accomplished by using the Jaccard distance function described in Section~\nameref{section:preliminaries}. Correspondingly, in our adaptation of Kemeny, at each position within the chord sequence $W \in A^k$, we wish to select a chord $W[j]$ for position $j$ in a way that minimizes the cumulative distance from all individual chords $b_{i,j}$.
Formally, we define the Kemeny optimization quantity as follows:
\begin{align}
    K(W) = \left( \sum_{i=1}^n \sum_{j=1}^k d(b_{i,j}, W[j]) \right)\ .
\end{align}

Here, the function $d$ quantifies the dissimilarity between two chords. In this paper, we consider $d$ as the Jaccard distance, and so $d(b_{i,j}, W[j])\in [0,1]$ for all $i,j$.
The output of the algorithm is $\arg\min_{W}({K(W)})$.

\paragraph{PAV}\label{section:pav}
Next, we are interested in proportionality~\cite{skowron2016}; and, to this end, we adapt the PAV voting rule to our setting.
Concretely, we consider the \emph{proportional Borda count} rule: this rule utilizes the harmonic series to assign scores to candidates or preferences based on their rankings, aiming for proportional representation according to their ranking~\cite{dey2017proportional}. In our context, the objective is to create a chord sequence that effectively represents the diverse preferences of the agents, proportionally. To accomplish this, we employ an objective functions that is largely similar to the proportional Borda count, albeit where the Borda score is replaced by our Jaccard metric.
Formally, the objective function is defined as follows:
\begin{align}
P(W) = \left( \sum_{i=1}^n \sum_{j=1}^k \frac{1}{j} \cdot \left( s\left(U\left(B_{i},W\right)\right)[j]\right) \right)\ .
\end{align}

In this equation:
\begin{itemize}
    \item $U$ is a utility function that takes two $k$-sized vectors and returns a $k$-sized vector of utilities. We set $U(B_i,W)[j]=1-d(b_{i,j},W[j])$.
    
    \item $B_i$ represents a $k$-sized vector of chords, representing the $i$th row in matrix~$B$.
    
    \item $s$ denotes a sorting function that takes a $k$-sized vector and arranges it in descending order.
\end{itemize}

The output of the algorithm is $\arg\max_W({P(W)})$. As this problem is naturally NP-Hard (see Section~\nameref{section:complexity}), for its computation (in our computer-based simulations), we utilized a heuristic approach a local search algorithm of simulated annealing.

\paragraph{Clustered-Kemeny}\label{section:clustered kemeny}
We introduce Clustered-Kemeny, an algorithm based on the Kemeny voting rule, adapted for the natural division of musical pieces into sections. The goal is to identify clusters of individuals with similar chord preferences, enabling the partitioning of the musical piece into sections and matching each cluster to its most representative section.

Assuming a given partition of the harmonic sequence's length $k$ into $x \leq n$ continuous sections, we formulate a linear program for optimizing voter clustering and section matching. The partition, represented by vector $Z$, must satisfy specific conditions.

The problem is divided into two nested sub-problems:
(1) Given a partition $Z$, find the optimal clustering of agents into sections; and (2) Find the optimal $Z$ from all possible partitions into at most $X$ sections, returning the optimal agent clustering.
The objective is to maximize total satisfaction by minimizing the distance of selected chord solution $W$ and their respective sections within the clustering.
Formally:
\[
\min_{Z\in \text{partitions}}\min_{a_i\in Z, W} \sum_{z \in Z} \sum_{i=1}^n \sum_{j=1}^k P(j, z, Z) \cdot Q(i, z, Z) \cdot (d(b_{i, j}, W[j]))
\]

Where: (1) $Q(i, z, Z)$ indicates whether agent $i$ is in section $z$ (1 if true, otherwise a specified value less than 1); (2) $P(j, z, Z)$ is 1 if position $j$ belongs to section $z$, 0 otherwise; and (3) For the last section: $P(j, x, Z)$ is 1 if $Z[x] \leq j \leq k$, 0 otherwise.
These collectively define the optimization problem for the Clustered-Kemeny algorithm.

\subsection{2-Gram-Based Approaches}\label{label:two gram}

Note that, above, we have only relied on the Jaccard distance for quantifying the chord similarity (between the ideal harmonic sequence of each agent and some proposed solution). Next we consider also the musical coherence (or harmonic flow), corresponding to the 2-gram approach described in Section~\nameref{section:musical coherence}.

In particular, the following approaches build upon the previous methodologies and share the same input as their predecessors; but they aim to maximize a new objective function, denoted as $G$, which is defined as follows using transition probabilities $p$:
\begin{align}
G(W) = -\sum_{i=0}^{k-1}\left(\log\left(p(W[i], W[i+1])\right)\right)\ .
\end{align}

Here, the transition probabilities $p$ quantify the likelihood of transitioning from one chord, \( \text{chord}_i \), to another chord, \( \text{chord}_{i+1} \), within a sequence. These probabilities are derived from observed frequencies of such transitions in our data set as described in Section~\nameref{section:musical coherence}.

These 2-Grams Based Algorithms expand upon the previous approaches, considering transition probabilities to enhance predictive accuracy while accommodating the collaborative nature of chord sequences.
Below we formally describe the details of taking into account such transition probabilities in the computation of the specific aggregation algorithms described above.

\paragraph{Plurality with 2-gram}
The objective is given by
$argmax_W({x_M\cdot M(W) + (1-x_M)\cdot G(W)})$, where $x_M\in (0,\cdots,1)$ is a constant.

\paragraph{Kemeny with 2-gram}
The objective is determined by
$argmin_W({x_K\cdot K(W) - (1-x_K)\cdot G(W)})$, where $x_K\in (0,\cdots,1)$ is a constant.

\paragraph{PAV with 2-gram}
The objective is determined by
$argmin_W({x_P\cdot P(W) - (1-x_P)\cdot G(W)})$, where $x_P\in (0,\cdots,1)$ is a constant.

\paragraph{Clustered-Kemeny with 2-gram}
The objective is determined by
$argmin_W({x_{KC}\cdot KC(W) - (1-x_{KC})\cdot G(W)})$, where $x_{KC}\in (0,\cdots,1)$ is a constant.

\section{Computational Complexity}\label{section:complexity}

We present the computational complexity of the aggregation goals defined above in Table~\ref{complexity-table}. Additionally, we provide two sketches of hardness proofs. The complete proofs, along with all other missing proofs of the complexity results, are included in the supplementary material.
\begin{table}[t]
\centering
\setlength{\tabcolsep}{12pt}
\begin{tabular}{|c||c|}
\hline
\textbf{Problem} & \textbf{Complexity} \\
\hline\hline
Plurality & Poly-time\\
\hline
Kemeny& Poly-time\\
\hline
Proportional& NP-hard\\
\hline
Plurality with 2-gram& Poly-time\\
\hline
Kemeny with 2-gram& Poly-time\\
\hline
Proportional with 2-gram& NP-hard\\
\hline
Clustered-Kemeny & NP-hard\\
\hline
Clustered-Kemeny with 2-gram & NP-hard\\
\hline
\end{tabular}
\caption{Various problems of collaborative harmonization and their computational complexity.}
\label{complexity-table}
\end{table}
\begin{theorem}
  PAV and PAV with 2-gram are NP-hard.
\end{theorem}

\begin{proof}
We prove the NP-hardness of the Proportional algorithm by noting that it includes Proportional Approval Voting (PAV) as a special case, where \(U[i]\in \{0,1\}\). This implies Proportional is NP-hard~\cite{skowron2016finding}. By setting \(x_p = 1\), which reduces to the Proportional algorithm, we establish that Proportional with 2-gram is also NP-hard.
\end{proof}

\begin{theorem}
  Clustered-Kemeny and Clustered-Kemeny with 2-gram are NP-hard.
\end{theorem}

\begin{proof}
We reduce the k-median problem, known to be NP-Hard~\cite{charikar1999constant}, to Clustered-Kemeny. Given $n$ strings \(S=\{s_1, s_2, \ldots, s_n\}\) of length $\ell$, a distance metric \(d\), and a threshold \(t\), the k-string median problem seeks $k$ median strings \(M=\{m_1, m_2, \ldots, m_k\}\) such that:
\[
  \sum_{i=1}^n \min_{1 \leq j \leq k} d(s_i, m_j) \leq t.
\]

We construct a Clustered-Kemeny instance by replicating each input string $k$ times, transforming it into an agent representation (e.g., "abc" becomes "abcabcabc"). Partitions and lengths are defined as $Z=\{L, L \cdot 2, L \cdot 3, \ldots, L \cdot k\}$. Setting $Q(i,z,Z) = 0$ for each agent \(i\), we show that the k-median problem is a yes instance if and only if Clustered-Kemeny is a yes instance, establishing its NP-hardness.

For Clustered-Kemeny with 2-gram, setting \(x_{KC}=1\) aligns it with the hardness proof of Proportional with 2-gram, thus proving it NP-hard as well.
\end{proof}

\section{Computer-Based Simulations}\label{section:simulations}

Next, we report on computer-based simulations to evaluate the quality of the proposed algorithms. Given the NP-hard nature of the problems, we used a heuristic approach with a simulated annealing solution. The local search was initiated with the Plurality algorithm, and each search had 1000 iterations.

Due to the scarcity of relevant data for collaborative harmony composition, we adopted a semi-artificial approach. We used real harmonizations of songs and introduced random perturbations to simulate ideal harmonies by different agents.

For the dataset, we used 8, 16, and 32 agents, each contributing an ideal chord progression. Each agent's progression had 8, 16, and 32 variations, respectively, of the original progression of songs from a processed dataset of 1015 jazz songs (see section~\ref{Grammar}). Variations were introduced by random chord swaps within ranges (0, 1), (1, 2), (2, 3), and (3, 4). The new chord was selected based on Jaccard distances, ensuring musical coherence.

For the 2-gram-based algorithms, we fine-tuned the weights through several rounds of testing, arriving at optimal weights: $X_M = 0.5$, $X_K = 0.9$, $X_P = 1 - 2e^{-4}$, and $X_{KC} = 0.9$.

Evaluation metrics were based on three measures for any solution $W$:

\begin{itemize}
\item \textbf{Song Similarity Measure}: Assesses the proximity of the aggregated chord progression to the original sequence. Lower distances indicate higher adherence to the original progression:
\[
\sum_{j=1}^m d(W[j], \text{original song}[j])\ .
\]

\item \textbf{Cluster Coherence}: Evaluates the coherence of chords within each 16-bar section of the aggregated chord progression. Lower cluster distances indicate smoother transitions:
\[
\frac{1}{(m-16)\cdot n}\sum_{i=1}^{n}\sum_{j=1}^{m-16} \sum_{k=j}^{j+16} d(W[k], b_{i,k})\ .
\]

\item \textbf{Musical Coherence}: Measures the appropriateness of the aggregated chord progression within the context of the 2-gram. Higher scores indicate more harmonious sequences, as detailed in the supplementary material.
\end{itemize}

\section{Results and Discussion}

\begin{figure}[t]
  \centering
    \caption{Song Similarity vs Error Ranges for 32 Agents.}
  \includegraphics[width=0.5\textwidth]{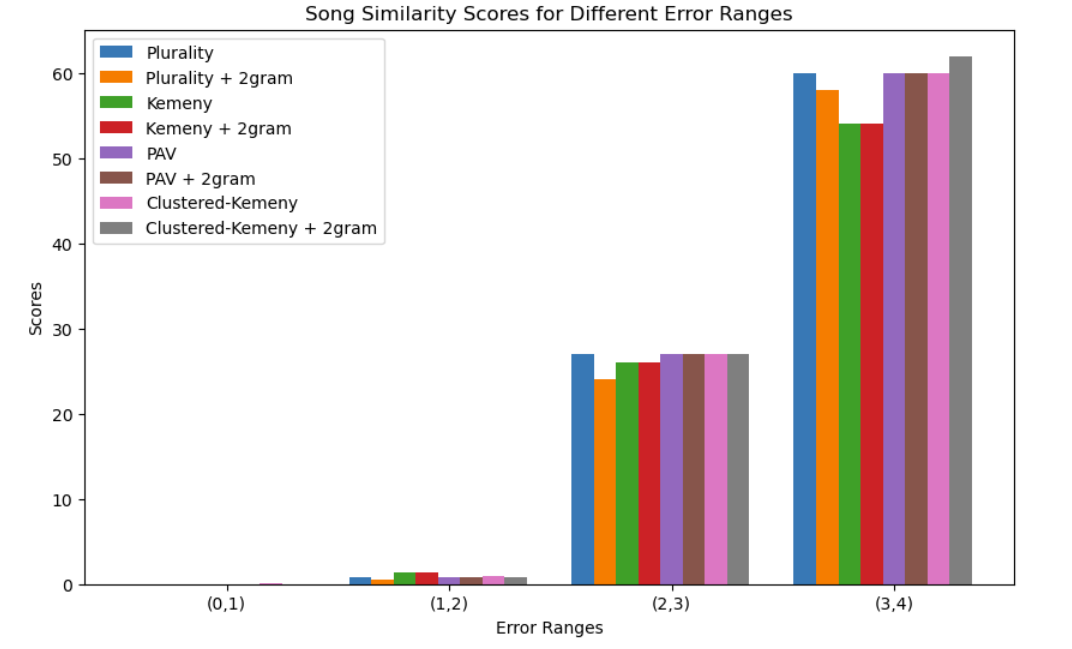}
  \label{fig:1}
\end{figure}

\begin{figure}[t]
  \centering
  \includegraphics[width=0.5\textwidth]{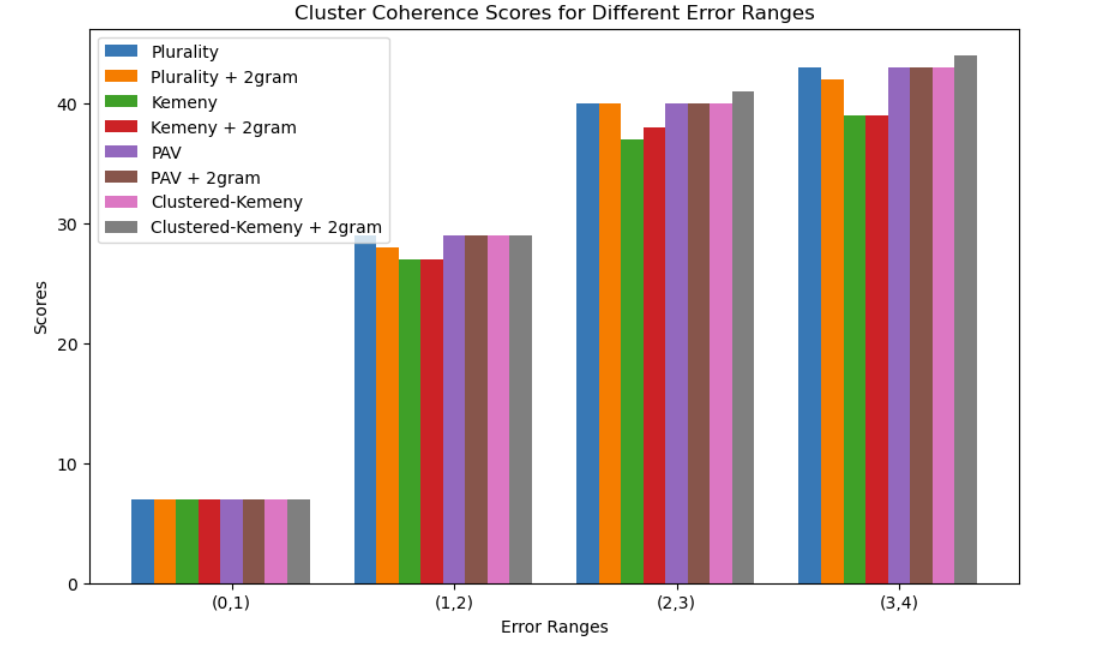}
  \caption{Cluster Coherence vs Error Ranges for 32 Agents.}
  \label{fig:2}
\end{figure}
\begin{figure}[t]
  \centering
  \includegraphics[width=0.5\textwidth]{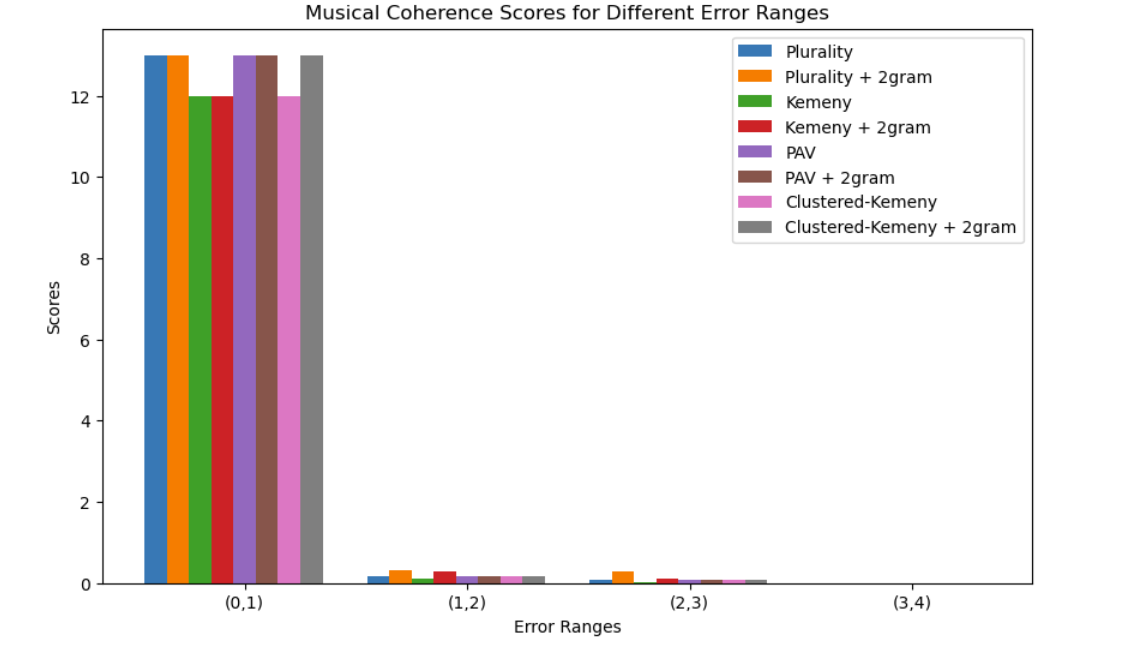}
  \caption{Musical Coherence vs Error Ranges for 32 Agents.}
  \label{fig:3}
\end{figure}

The full results shown in the supplementary material the average scores of all 1015 instances, each scaled by a factor of 100. These scores are associated with various parameters: a specific measurement, an algorithm used, the number of chord swaps employed to generate different variations, and the number of agents (representing the variations).

Figure \ref{fig:3} demonstrates a trend where an increase in the number of chord swaps (from 0 to 1) leads to a significant decline in musical coherence. This decline occurs because employing the Jaccard distance for chord swaps often generates chord progressions or variations that are less probable within the data set. Consequently, these less probable variations adversely affect the outcomes produced by the aggregation algorithms.

Further, direct analysis of the results are given in the supplementary material and the source code can be found here~\url{https://anonymous.4open.science/r/CollaborativeHarmonization/Probabilities%20and%20Distances.txt}.

\subsection{Insights and Implications}

Here are some of the main insights we derive from the results of the simulations:
\begin{itemize}
    \item Robustness to Errors- Plurality-based algorithms exhibit robustness to errors, maintaining better proximity to original sequences and higher Musical Coherence across varying error ranges and agent counts.
    \item Balancing Chord Adherence and Coherence- Kemeny methods strike a balance between chord adherence and musical coherence, particularly effective when integrated with 2-gram.
    \item Scalability Concerns- Plurality-based algorithms demonstrate scalability and consistent performance across different agent counts. In contrast, Clustered-Kemeny might face challenges in larger-scale settings, potentially indicating scalability issues.
    \item Clustering - It is intriguing to note that despite the anticipated ability of Clustered-Kemeny to generate coherent and adherent sections, it appears to have fallen short in achieving this objective. One plausible explanation could be attributed to the creation of numerous small clusters, each smaller than 16 bars, coupled with only a few agents assigned to each cluster. This scenario led to the formation of multiple 16-bar clusters that lacked coherence, consequently resulting in an inadequate evaluation of the measure's effectiveness.
\end{itemize}

\subsection{Summary}

Essentially, we observe that, while Kemeny + 2-gram doesn't consistently yield the best results across all combinations of measures, number of agents, and number of errors affecting the variations of a song, it reliably positions itself in a solid middle ground. It may not excel in every scenario, but it consistently maintains a respectable performance, offering a balance among different evaluation measures when considering diverse variations and conditions.

\section{Conclusions}\label{section:outlook}

In this study, our primary goal was to explore collaborative harmonization as a case study, drawing parallels to the structured nature of text aggregation. This comparative approach allowed us to gain insights by leveraging the structured framework inherent in harmonization, potentially informing strategies for text aggregation in the future. We established a formal model, introduced various algorithms, and found that simpler algorithms consistently performed better across all considered measures.

Moving forward, we suggest several areas for future work. Firstly, the chosen Jaccard distance metric presented in this work, is less practical and does not seem to capture the entire of essence of musical harmonization and thus investigating other, more relevant distance metrics is needed. Secondly Our research focused on pairwise transitions forming 2-grams for harmony suitability. A more refined approach involving $n$-grams would provide a more comprehensive analysis of chord relationships beyond consecutive pairs.

Additionally, our study exclusively addressed harmonization, neglecting melody considerations. Future investigations should integrate melody into the harmonization process to enhance musical coherence and quality.

An extension of our work could involve using more common practiced distance metrics and empirical assessments with real musicians, providing valuable insights into how these algorithms are perceived and valued in practical musical contexts.

All our algorithms share a vertical nature, comparing chords in the same position. Addressing the limitation of not recognizing equivalence in non-identical chord positions is a pertinent avenue for future research to improve contextual understanding.

For future endeavors in text aggregation, recognizing and utilizing structured string or text scenarios can greatly facilitate more meaningful aggregation. Aligning the aggregation process with the inherent structure present in the text can significantly enhance the quality and relevance of aggregated outcomes.

\bibliography{bib}
\section{Appendix}
\subsection{Complexity Proofs}\label{Complexity Proofs}

We begin with Plurality; then go on to Kemeny, PAV, and clustered Kemeny.

\subsubsection{Complexity of Plurality}

\begin{observation}
  Plurality can be computed in polynomial-time.
\end{observation}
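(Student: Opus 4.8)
The plan is to exploit the fact that the Plurality objective $M(W)$ is \emph{separable} across the $k$ positions of the harmonic sequence. First I would swap the order of summation to write
\begin{align}
M(W) = \sum_{j=1}^k \left( \sum_{i=1}^n \mathbb{I}(b_{i,j} = W[j]) \right),
\end{align}
and then observe that the inner sum for position $j$ depends only on the single chord $W[j]$ together with the $j$th column $(b_{1,j}, \ldots, b_{n,j})$ of the preference matrix $\mathcal{B}$; crucially, it is unaffected by the choice of chord in any other position. This is the key structural feature distinguishing plain Plurality from its 2-gram variant, where consecutive positions interact through the transition term $G(W)$.

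Given this separability, maximizing $M(W)$ over all $W \in A^k$ decomposes into $k$ independent sub-problems, one per position. For each position $j$ I would compute, for every chord $c$ in the alphabet $A$, the tally $t_j(c) = \sum_{i=1}^n \mathbb{I}(b_{i,j} = c)$ — that is, the number of agents placing chord $c$ at position $j$ — and then set $W[j]$ to be any chord attaining $\max_{c \in A} t_j(c)$, fixing an arbitrary deterministic tie-breaking rule. Because the per-position maximizers are chosen independently and the total score is the sum of the per-position contributions, assembling these choices into a sequence $W$ yields a global maximizer of $M$.

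For the running time, building all tallies requires a single pass over the $n \times k$ entries of $\mathcal{B}$ while maintaining one counter per chord per column, costing $O(nk)$; selecting the per-position maximum costs $O(m)$ per column, for a total of $O(mk)$, where $m = |A|$. Hence the whole procedure runs in $O(nk + mk)$ time, which is polynomial in the instance size. I do not anticipate any genuine obstacle: the entire content of the argument is the separability observation, after which everything reduces to an elementary counting-and-maximization routine. The only point needing a line of care is the formal justification that combining the independent per-position maximizers produces a global optimum, which follows immediately from the additive decomposition displayed above.
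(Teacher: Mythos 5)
Your proposal is correct and follows essentially the same approach as the paper's proof: both rest on the observation that the Plurality objective is separable across chord positions, so one simply selects the most frequent chord in each column independently. You merely spell out the tallying and the $O(nk+mk)$ running-time bound in more detail than the paper does.
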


\begin{proof}
This follows as Plurality treats each chord-position independently. Formally, the following algorithm has a polynomial time complexity:
  For every column \(B_j\), the algorithm selects the chord that appears the most.
\end{proof}

Also the 2-gram version of Plurality admits a polynomial-time algorithm.

\begin{theorem}
Plurality with 2-gram can be computed in polynomial time.
\end{theorem}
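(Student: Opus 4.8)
The plan is to recognize that the Plurality-with-2-gram objective has a \emph{chain} structure that makes it amenable to dynamic programming (a Viterbi-style recurrence). First I would rewrite the objective so that its dependence on $W$ decomposes into per-position (``node'') terms and per-transition (``edge'') terms. For each position $j \in [k]$ and chord $a \in A$, define the plurality count $c_j(a) := \sum_{i=1}^n \mathbb{I}(b_{i,j}=a)$, computable in $O(nk)$ total time. Then the plurality score satisfies $M(W) = \sum_{j=1}^k c_j(W[j])$, while the coherence term is $G(W) = -\sum_{j=1}^{k-1}\log p(W[j],W[j+1])$. Hence maximizing $x_M M(W) + (1-x_M)G(W)$ is the same as maximizing
\[
F(W) = x_M\sum_{j=1}^k c_j(W[j]) - (1-x_M)\sum_{j=1}^{k-1}\log p(W[j],W[j+1]),
\]
where every summand depends on at most a single position or a single consecutive pair.

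Next I would set up the dynamic program over the positions $1,\dots,k$. Let $T[j][a]$ denote the maximum value of the partial objective restricted to the prefix $W[1..j]$ subject to $W[j]=a$. The base case is $T[1][a] = x_M\,c_1(a)$ for every $a \in A$, and the recurrence is
\[
T[j][a] = x_M\,c_j(a) + \max_{b\in A}\Bigl(T[j-1][b] - (1-x_M)\log p(b,a)\Bigr).
\]
The optimal objective value is $\max_{a\in A} T[k][a]$, and a maximizing sequence $W$ is recovered by the usual backtracking over the $\arg\max$ choices. Correctness follows from the optimal-substructure property of $F$: because each term couples at most two adjacent positions, an optimal solution ending in chord $a$ at position $j$ must extend an optimal prefix ending at position $j-1$, which justifies the recurrence by a standard exchange argument.

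For the complexity, writing $m := |A|$, the table has $k\cdot m$ cells and each cell costs $O(m)$ (a maximization over the previous chord $b$), so the DP runs in $O(km^2)$ time after the $O(nk)$ preprocessing for the counts $c_j(\cdot)$; this is polynomial in the input size. The only care points --- none of which is a genuine obstacle --- are the sign bookkeeping (we are maximizing, $\log p$ is nonpositive, and $G$ carries its own minus sign) and handling transitions with $p(b,a)=0$, which should be assigned value $-\infty$ so that the DP never uses them, consistent with treating a zero-probability progression as forbidden. I expect no substantive difficulty; the entire content of the proof is the observation that the additive, locally coupled form of the objective turns the search over the exponentially many sequences in $A^k$ into a longest-path computation on a layered graph.
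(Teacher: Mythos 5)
Your proof is correct and takes essentially the same route as the paper: the paper's (very terse) one-sentence argument invokes the same Viterbi-style dynamic program $T(j,a)$ over positions that it spells out in full for Kemeny with 2-gram, with $T(1,a)$ reducing to plain Plurality as the base case. Your write-up is in fact more detailed than the paper's, and the $O(nk + km^2)$ accounting and the remarks on sign bookkeeping and zero-probability transitions are all sound.
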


\begin{proof}
This proof follows a structure similar to the previous theorem, where we establish that \(T(1,a)\) is equivalent to Plurality without the 2-gram feature, making it a polynomial algorithm.
\end{proof}

\subsubsection{Complexity of Kemeny}

Next, we show that also Kemeny is polynomial-time solvable. This is perhaps surprising (as the original Kemeny voting rule that corresponds to minimizing the sum of swap distance in ordinal elections in NP-hard~\cite{hemaspaandra2005complexity}); it is so, however, as our adaptation of Kemeny also considers each chord-position separately.

\begin{observation}
Kemeny can be computed in polynomial time.
\end{observation}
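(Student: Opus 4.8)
The plan is to exploit the same separability that made the non-2-gram Kemeny easy: the objective $K(W) = \sum_{i=1}^n \sum_{j=1}^k d(b_{i,j}, W[j])$ decomposes cleanly across positions. First I would rewrite the double sum by swapping the order of summation, so that $K(W) = \sum_{j=1}^k \bigl( \sum_{i=1}^n d(b_{i,j}, W[j]) \bigr)$. The inner sum depends only on the single chord $W[j]$ chosen at position $j$, and is completely decoupled from the choices made at all other positions. Hence minimizing the whole sum reduces to independently minimizing each bracketed term.

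Next I would give the explicit algorithm: for each position $j \in [k]$, and for each candidate chord $c \in A$, compute the column cost $\mathrm{cost}(j,c) := \sum_{i=1}^n d(b_{i,j}, c)$, and set $W[j]$ to be the chord $c$ achieving the minimum. Since the Jaccard distance between any two chords in the fixed alphabet $A$ can be looked up or computed in constant time (each chord has exactly four notes), computing $\mathrm{cost}(j,c)$ takes $O(n)$ time, and there are $|A| = m$ candidates and $k$ positions, giving a total running time of $O(k \cdot m \cdot n)$, which is polynomial in the input size.

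The correctness argument is a short optimality-of-separable-minimization claim: because the feasible set $A^k$ is a product space and the objective is a sum of terms each involving a single coordinate, choosing the per-coordinate minimizer yields a global minimizer. I would state this as a one-line observation (any $W$ differing from the constructed one at some position $j$ has a weakly larger $j$-th term and identical terms elsewhere, so it cannot be strictly better). This mirrors exactly the remark in the text that the adaptation of Kemeny ``also considers each chord-position separately,'' which is the whole reason the NP-hardness of the classical ordinal Kemeny rule does not carry over.

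I do not expect a genuine obstacle here, since the result follows immediately from separability; the only point requiring any care is articulating precisely \emph{why} the classical Kemeny hardness is avoided. In the standard ordinal setting the candidate orderings at different ``positions'' are entangled by the transitivity constraint of a linear order, so the problem does not decompose; in the present harmonization model each position is filled by a free, independent choice from $A$ with no cross-position constraint, which is exactly what breaks the hardness and admits the trivial column-wise optimum. I would flag this contrast explicitly so the reader sees that the polynomial-time result is not in tension with the cited NP-hardness of ordinal Kemeny.
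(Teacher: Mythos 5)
Your proposal is correct and follows essentially the same route as the paper's proof: column-wise minimization over the $m$ chords for each of the $k$ positions, yielding the same $O(m\cdot n\cdot k)$ bound. Your added separability argument and the contrast with ordinal Kemeny are sound elaborations of what the paper states more tersely, but they do not constitute a different approach.
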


\begin{proof}
We describe a polynomial-time algorithm:
  For every column \(B_j\) and for every chord, the algorithm selects the chord that minimizes the summation of distances considering all agents. This results in a complexity of \(O(m\cdot n\cdot k)\), which is polynomial.
\end{proof}

Furthermore, even Kemeny with 2-gram can be computed in polynomial time; here, however, we use dynamic programming that follows the chord progression.

\begin{theorem}
  Kemeny with 2-gram can be computed in polynomial time.
\end{theorem}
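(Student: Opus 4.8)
The plan is to exploit the fact that the combined objective $x_K \cdot K(W) - (1-x_K)\cdot G(W)$ splits into \emph{unary} terms, each depending on the chord at a single position, plus \emph{pairwise} terms, each depending only on the chords at two \emph{consecutive} positions. Indeed, the Kemeny part is completely separable across positions, since
$$x_K\cdot K(W) = \sum_{j=1}^k U_j(W[j]), \qquad U_j(a) := x_K\sum_{i=1}^n d(b_{i,j},a),$$
exactly as in the polynomial-time argument already given for plain Kemeny. The 2-gram part $G(W) = -\sum_{j} \log(p(W[j],W[j+1]))$ is a sum of terms, each of which involves only the adjacent pair $(W[j],W[j+1])$. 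Writing $T(a,b) := -(1-x_K)\cdot(-\log p(a,b)) = (1-x_K)\log p(a,b)$ for the contribution of a transition from chord $a$ to chord $b$, the whole objective becomes $\sum_{j=1}^k U_j(W[j]) + \sum_{j=1}^{k-1} T(W[j],W[j+1])$.

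An objective of this form --- unary costs per position plus pairwise costs on consecutive positions --- is precisely a minimum-cost path through a layered trellis (equivalently, a Viterbi / shortest-path computation on a directed acyclic graph), and is solved by dynamic programming. First I would build a table $D[j][a]$ holding the minimum value of the objective restricted to positions $1,\dots,j$ under the constraint $W[j]=a$. The base case is $D[1][a] = U_1(a)$, and the recurrence is
$$D[j][a] = U_j(a) + \min_{b\in A}\bigl(D[j-1][b] + T(b,a)\bigr).$$
The optimum is $\min_{a\in A} D[k][a]$, and an optimal sequence $W$ is recovered by storing, at each cell, the minimizing predecessor $b$ and tracing these back-pointers from position $k$ down to position $1$.

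For the complexity, I would precompute every unary cost $U_j(a)$ in $O(k\,m\,n)$ time (summing the $n$ agent distances for each of the $m$ chords at each of the $k$ positions), and precompute the $m^2$ transition costs $T(a,b)$ directly from the chord-transition graph $G$. Filling the DP table then costs $O(k\,m^2)$, since each of the $k\,m$ cells takes a minimum over $m$ predecessors. The total $O(k\,m\,n + k\,m^2)$ is polynomial in the instance size, which establishes the claim.

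The one genuinely new idea relative to plain Kemeny --- and the main conceptual obstacle --- is recognizing that the 2-gram term, although it does couple positions (so we can no longer optimize each position in isolation), couples only \emph{adjacent} positions; this Markov/chain structure is exactly what licenses the layered DP and keeps the running time polynomial. The remaining points are routine: tracking the weight $x_K$ and the sign of the $G$ term carefully, and deciding how to treat transitions with $p(a,b)=0$ (treating them as forbidden edges of cost $+\infty$, or avoiding them by the smoothing implicit in the data-based transition probabilities) so that the $\log$ is always well defined.
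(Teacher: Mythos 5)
Your proof is correct and follows essentially the same route as the paper's: a layered dynamic program indexed by position and final chord, with the recurrence $D[j][a]=U_j(a)+\min_b(D[j-1][b]+T(b,a))$ matching the paper's $T(j,a)$ recurrence. Your version is in fact more complete, as it makes the complexity bound $O(kmn+km^2)$, the back-pointer recovery, and the treatment of the weights $x_K$ explicit where the paper leaves them implicit.
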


\begin{proof}
We establish the polynomial nature of Kemeny with 2-gram by describing an algorithm that is based on dynamic programming. Let $T(j,a)$ denote the minimal cost of Kemeny with 2-gram for the first $j$ chords, given that the last chord is $W[j]=a$, where $a$ can be any of the $m$ chords.

To begin, we note that $T(1,a)$ corresponds to the case where there is only one chord, essentially Kemeny without the 2-gram feature. This base case has been previously shown to be polynomial.

Next, we introduce a recurrence relation: $T(j,a) = \arg\min_{a'} [T(j-1,a') + \sum_{i=1}^n d(b_{i,j},a) - \log(p(a',a))]$. This recurrence efficiently computes the minimal cost for each chord $a$ for the first $j$ chords and thus, also for the total of k chords.

Since we can find $\arg\min_a T(k,a)$ in polynomial time, this algorithm's overall complexity is polynomial. This process is akin to finding the optimal solution $W$ in polynomial time.
\end{proof}
\subsubsection{Complexity of PAV}

\begin{theorem}
  PAV and PAV with 2-gram are NP-hard.
\end{theorem}

\begin{proof}
We establish the NP-hardness of the Proportional algorithm by considering the general utility function \(U\), which returns a vector of utilities for each agent \(i\) where \(U[j]\in [0,1]\) for all $j\in [k]$. We also note that the Proportional algorithm encompasses Proportional Approval Voting (PAV) as a special case, where \(U[i]\in \{0,1\}\). This establishes that Proportional is NP-hardness~\cite{skowron2016finding}.

By considering the special case of setting \(x_p\) to $1$, which is equivalent to the Proportional algorithm, we can conclude that Proportional with 2-gram is also  NP-hard.
\end{proof}

\subsubsection{Complexity of Clustered Kemeny}

We proceeed to consider Clustered-Kemeny.

\begin{theorem}
  Clustered-Kemeny and Clustered-Kemeny with 2-gram are NP-hard.
\end{theorem}

\begin{proof}
To establish the NP-hardness of Clustered-Kemeny and Clustered-Kemeny with 2-gram, we start by drawing a connection to the k-median problem, proven as NP-Hard~\cite{charikar1999constant}. In our case, we consider the k-string median problem since a chord sequence effectively represents a string. 
    
The k-string median problem is formally defined as follows: Given a set of $n$ strings $S=\{s_1, s_2, \ldots, s_n\}$ of length $\ell$, a string distance metric $d$ normalized to return a value between $0$ and $1$, and a threshold $t$, determine if there exist $k$ median strings $M=\{m_1, m_2, \ldots, m_k\}$ of length $\ell$ each such that:
\[
  \sum_{i=1}^n\min_{1 \leq j \leq k} d(s_i, m_j) \leq t\ .
\]
    
Given such an instance, we construct an instance to the decision variant of Clustered-Kemeny that is given the partitioning into sections, as well as defining lengths for each section is formally represented by:
\[
\sum_{z \in Z} \sum_{i=1}^n \sum_{j=1}^k P(j, z, Z) \cdot Q(i, z, Z) \cdot (d(b_{i, j}, W[j])) < t\ .
\]

To construct an instance for the k-median problem comprising $n$ strings of length $\ell$, we create a corresponding instance for the Kemeny-Clustering problem by transforming each input string into an agent representation. In this transformation, each string is replicated $k$ times to generate the agents for the Kemeny-Clustering instance (e.g., an input string "abc" replicated $k=3$ times transforms into an agent with the string "abcabcabc"). We then set $value=0$ for each agent $i$ in $Q(i,z,Z)$, and a partition into $k$ sections by indices is defined as $Z=\{L, L\cdot 2, L\cdot 3, \ldots, L\cdot k\}$. 

It is evident that the k-median problem is a yes instance if and only if the Kemeny-Clustering problem is a yes instance. This implies that even when the given partition and length are specified, the Clustered-Kemeny problem remains challenging, as it necessitates checking multiple possible partitions for each number of sections, thereby intensifying the complexity of optimization.

Furthermore, when considering Clustered-Kemeny with 2-gram, the methodology is similar to the hardness proof of Proportional with 2-gram, achieved by setting $x_{KC}=1$.
\end{proof}
\subsection{Simulations Full Result}\label{Simulations Full Result}

Next We provide the full results tables below.

\begin{table*}[ht]
\centering
\caption{Errors (0,1) - (3,4) and 8 agents - Algorithms Scores.}
\resizebox{\textwidth}{!}{%
\begin{tabular}{|c|c|c|c|c|}
\hline
Error & Algorithm & Song Similarity & Cluster Coherence & Musical Coherence \\
\hline
(0,1) & Plurality & 0.41 & 7.1 & 12 \\
(0,1) & Plurality + 2-gram & 0.31 & 7.1 & 13 \\
(0,1) & Kemeny & 0.46 & 7.1 & 11 \\
(0,1) & Kemeny + 2-gram & 0.39 & 7.1 & 11 \\
(0,1) & PAV & 0.41 & 7.1 & 12 \\
(0,1) & PAV + 2-gram & 0.41 & 7.1 & 12 \\
(0,1) & Clustered-Kemeny & 1 & 7.3 & 11 \\
(0,1) & Clustered-Kemeny + 2-gram & 0.073 & 7 & 13 \\
\hline
(1,2) & Plurality & 26 & 29 & 0.18 \\
(1,2) & Plurality + 2-gram & 23 & 28 & 0.31 \\
(1,2) & Kemeny & 22 & 27 & 0.12 \\
(1,2) & Kemeny + 2-gram & 21 & 27 & 0.3 \\
(1,2) & PAV & 26 & 29 & 0.18 \\
(1,2) & PAV + 2-gram & 26 & 29 & 0.18 \\
(1,2) & Clustered-Kemeny & 26 & 29 & 0.17 \\
(1,2) & Clustered-Kemeny + 2-gram & 26 & 29 & 0.18 \\
\hline
(2,3) & Plurality & 58 & 40 & 0.0025 \\
(2,3) & Plurality + 2-gram & 57 & 40 & 0.0066 \\
(2,3) & Kemeny & 52 & 37 & 0.0012 \\
(2,3) & Kemeny + 2-gram & 52 & 38 & 0.0052 \\
(2,3) & PAV & 58 & 40 & 0.0025 \\
(2,3) & PAV + 2-gram & 58 & 40 & 0.0025 \\
(2,3) & Clustered-Kemeny & 58 & 40 & 0.0024 \\
(2,3) & Clustered-Kemeny + 2-gram & 59 & 41 & 0.0031 \\
\hline
(3,4) & Plurality & 70 & 43 & 0.00083 \\
(3,4) & Plurality + 2-gram & 70 & 42 & 0.0024 \\
(3,4) & Kemeny & 67 & 39 & 0.00042 \\
(3,4) & Kemeny + 2-gram & 66 & 39 & 0.0018 \\
(3,4) & PAV & 70 & 43 & 0.00083 \\
(3,4) & PAV + 2-gram & 70 & 43 & 0.00085 \\
(3,4) & Clustered-Kemeny & 70 & 43 & 0.00082 \\
(3,4) & Clustered-Kemeny + 2-gram & 71 & 44 & 0.0011 \\
\hline
\end{tabular}}
\end{table*}

\begin{table*}[ht]
\centering
\caption{Errors (0,1) - (3,4) and 16 agents - Algorithms Scores.}
\resizebox{\textwidth}{!}{%
\begin{tabular}{|c|c|c|c|c|}
\hline
Error & Algorithm & Song Similarity & Cluster Coherence & Musical Coherence \\
\hline
(0,1) & Plurality & 0.0012 & 7 & 13 \\
(0,1) & Plurality + 2-gram & 0.0015 & 7 & 13 \\
(0,1) & Kemeny & 0.004 & 7 & 12 \\
(0,1) & Kemeny + 2-gram & 0.0029 & 7 & 12 \\
(0,1) & PAV & 0.0012 & 7 & 13 \\
(0,1) & PAV + 2-gram & 0.0012 & 7 & 13 \\
(0,1) & Clustered-Kemeny & 0.36 & 7.1 & 12 \\
(0,1) & Clustered-Kemeny + 2-gram & 0.031 & 7 & 13 \\
\hline
(1,2) & Plurality & 8.9 & 7.2 & 7.2 \\
(1,2) & Plurality + 2-gram & 7.2 & 7.2 & 7.1 \\
(1,2) & Kemeny & 8.8 & 8.2 & 8.2 \\
(1,2) & Kemeny + 2-gram & 8.2 & 8.2 & 8.1 \\
(1,2) & PAV & 8.9 & 8.9 & 8.9 \\
(1,2) & PAV + 2-gram & 8.9 & 8.9 & 8.9 \\
(1,2) & Clustered-Kemeny & 9.1 & 8.9 & 8.9 \\
(1,2) & Clustered-Kemeny + 2-gram & 8.9 & 8.9 & 8.9 \\
\hline
(2,3) & Plurality & 44 & 42 & 42 \\
(2,3) & Plurality + 2-gram & 42 & 42 & 42 \\
(2,3) & Kemeny & 41 & 40 & 40 \\
(2,3) & Kemeny + 2-gram & 40 & 40 & 40 \\
(2,3) & PAV & 44 & 44 & 44 \\
(2,3) & PAV + 2-gram & 44 & 44 & 44 \\
(2,3) & Clustered-Kemeny & 44 & 45 & 45 \\
(2,3) & Clustered-Kemeny + 2-gram & 45 & 45 & 45 \\
\hline
(3,4) & Plurality & 66 & 66 & 66 \\
(3,4) & Plurality + 2-gram & 66 & 66 & 66 \\
(3,4) & Kemeny & 62 & 62 & 62 \\
(3,4) & Kemeny + 2-gram & 62 & 62 & 62 \\
(3,4) & PAV & 66 & 66 & 66 \\
(3,4) & PAV + 2-gram & 66 & 66 & 66 \\
(3,4) & Clustered-Kemeny & 66 & 67 & 67 \\
(3,4) & Clustered-Kemeny + 2-gram & 67 & 67 & 67 \\
\hline
\end{tabular}}
\end{table*}

\begin{table*}[ht]
\centering
\caption{Errors (0,1) - (3,4) and 32 agents - Algorithms Scores with 2-grams included.}
\resizebox{\textwidth}{!}{%
\begin{tabular}{|c|c|c|c|c|}
\hline
Error & Algorithm & Song Similarity & Cluster Coherence & Musical Coherence \\
\hline
(0,1) & Plurality & 0 & 7 & 13 \\
(0,1) & Plurality + 2-gram & 0 & 7 & 13 \\
(0,1) & Kemeny & 0.0012 & 7 & 12 \\
(0,1) & Kemeny + 2-gram & 0.0009 & 7 & 12 \\
(0,1) & PAV & 0 & 7 & 13 \\
(0,1) & PAV + 2-gram & 0 & 7 & 13 \\
(0,1) & Clustered-Kemeny & 0.07 & 7.1 & 12 \\
(0,1) & Clustered-Kemeny + 2-gram & 0.015 & 7 & 13 \\
\hline
(1,2) & Plurality & 6.2 & 7.1 & 7.1 \\
(1,2) & Plurality + 2-gram & 5.1 & 7.1 & 7.1 \\
(1,2) & Kemeny & 6.1 & 7.2 & 7.2 \\
(1,2) & Kemeny + 2-gram & 5.8 & 7.2 & 7.2 \\
(1,2) & PAV & 6.2 & 7.1 & 7.1 \\
(1,2) & PAV + 2-gram & 6.2 & 7.1 & 7.1 \\
(1,2) & Clustered-Kemeny & 6.3 & 7.3 & 7.3 \\
(1,2) & Clustered-Kemeny + 2-gram & 6.1 & 7.1 & 7.1 \\
\hline
(2,3) & Plurality & 38 & 40 & 40 \\
(2,3) & Plurality + 2-gram & 37 & 40 & 40 \\
(2,3) & Kemeny & 35 & 38 & 38 \\
(2,3) & Kemeny + 2-gram & 35 & 38 & 38 \\
(2,3) & PAV & 38 & 40 & 40 \\
(2,3) & PAV + 2-gram & 38 & 40 & 40 \\
(2,3) & Clustered-Kemeny & 39 & 40 & 40 \\
(2,3) & Clustered-Kemeny + 2-gram & 38 & 40 & 40 \\
\hline
(3,4) & Plurality & 59 & 60 & 60 \\
(3,4) & Plurality + 2-gram & 60 & 60 & 60 \\
(3,4) & Kemeny & 54 & 55 & 55 \\
(3,4) & Kemeny + 2-gram & 55 & 55 & 55 \\
(3,4) & PAV & 59 & 60 & 60 \\
(3,4) & PAV + 2-gram & 59 & 60 & 60 \\
(3,4) & Clustered-Kemeny & 60 & 61 & 61 \\
(3,4) & Clustered-Kemeny + 2-gram & 60 & 61 & 61 \\
\hline
\end{tabular}}
\end{table*}

\subsection{Further Analysis of the Simulation Results}

We provide some further analysis of the simulation results.

\subsubsection{Distance Evaluation}

\begin{itemize}
    \item \textbf{Plurality and Plurality + 2-gram} consistently exhibit significantly lower Jaccard Distances, indicating their ability to maintain closer proximity to the original sequences, especially evident as the errors increase.
    \item \textbf{Kemeny and Kemeny + 2-gram} perform competitively, showcasing slightly higher Jaccard Distances compared to the Plurality-based algorithms but demonstrating resilience against errors in chord alterations.
    \item \textbf{PAV methods} generally showcase stable but relatively higher Jaccard Distances compared to Plurality and Kemeny algorithms, implying slightly inferior performance in maintaining sequence proximity.
\end{itemize}

\subsubsection{Cluster Coherence Analysis}

\begin{itemize}
    \item \textbf{Plurality, Plurality + 2-gram} consistently maintain relatively low cluster distances across different error ranges and agent counts, indicating their ability to preserve coherence and smooth transitions within musical segments.
    \item \textbf{Clustered-Kemeny}, especially in scenarios with higher errors and agent counts, shows higher cluster distances, implying challenges in maintaining cohesion within smaller musical sections compared to other algorithms.
\end{itemize}

\subsubsection{Musical Coherence}

\begin{itemize}
    \item \textbf{Plurality and Kemeny algorithms} generally showcase better Musical Coherence, indicating that the chord progressions they generate align more harmoniously with the provided 2-gram.
    \item \textbf{Plurality + 2-gram and Kemeny + 2-gram} approaches display competitive Musical Coherence, suggesting their effectiveness in creating chord progressions that fit well within the context of the provided 2-gram.
\end{itemize}

\subsection{Toy Example}

We provide a toy example for illustrating instances of the model and the aggregation algorithms used throughout the paper.

\begin{example}
    Let us consider a simple scenario where 3 agents were tasked to create a sequence of 4 chords each:

    \begin{itemize}
        \item Agent 1: Cmaj7, Dm7, G7, Cmaj7.
        \item Agent 2: Am7, Dm7, E7, Am7.
        \item Agent 3: Cmaj7, Fmaj7, G7, Am7.
    \end{itemize}

    To illustrate collaborative harmonization algorithms, let's assess a proposed solution: 
    
    $W$: Cmaj7, Dm7, G7, Am7

    \begin{itemize}
        \item Plurality: This solution stands out as optimal since each chord was chosen by 2 out of the 3 agents.
        \item Kemeny: The given solution also appears optimal, demonstrating a minimal Jaccard distance of $2.167$.
        \item PAV (Proportional Approval Voting): By computing PAV scores for each agent, we find:
              Agent 1: $0.5$, Agent 2: $0.9166$, Agent 3: $0.5$. The overall PAV score sums up to $1.91666$.
        \item Clustered Kemeny: Considering a maximum of 3 sections and setting $value=0$ (where each agent is clustered into one section and the rest of their chord sequence is disregarded):
              \begin{itemize}
                  \item With 1 partition, the Kemeny score remains $2.167$.
                  \item With 2 partitions, viable partitions include:
                        \begin{itemize}
                            \item First section: (Cmaj7, Dm7), Second section: (G7, Am7).
                            \item First section: (Cmaj7), Second section: (Dm7, G7, Am7).
                            \item First section: (Cmaj7, Dm7, G7), Second section: (Am7).
                            \item First section: (Cmaj7), Second section: (Am7, Dm7, G7).
                        \end{itemize}
                  \item With 3 partitions, feasible partitions include:
                        \begin{itemize}
                            \item First section: (Cmaj7, Dm7), Second section: (G7), Third section (Am7).
                            \item First section (Cmaj7), Second section (Dm7, G7), Third section (Am7).
                            \item First section (Cmaj7), Second section (Dm7), Third section (G7, Am7).
                        \end{itemize}
                  The minimal distance score of $0$  emerges with the partition of:
                  First section (Cmaj7), Second section (Dm7, G7), Third section: (Am7). The clustering of agents would be Agent 3 in the first section, Agent 1 in the second section, and Agent 2 in the third section.
              \end{itemize}
\item In algorithms that manipulate the 2-gram score $G(W)$ through addition or reduction, we compute $G(W)$ utilizing the 2-gram probabilities:

  $G(W)=-(\log(CMaj7 -> Dmin7)+\log(Dmin7 -> G7)+\log(G7 -> A-7))=
  -(\log(0.0252903)+(0.199777)+\log(0.0053198))=10.524207$ 
    \end{itemize}
\end{example}

\end{document}